\definecolor{purple}{rgb}{.5,0,1}
\definecolor{orange}{rgb}{1,.5,0}
\definecolor{pink}{rgb}{1,0,.5}
\newcommand{\nn}{\notag}
\numberwithin{equation}{section}
\newtheorem{theorem}{Theorem}[section]
\newtheorem{proposition}[theorem]{Proposition}
\newtheorem{lemma}[theorem]{Lemma}
\newtheorem{definition}[theorem]{Definition}
\newtheorem{remark}[theorem]{Remark}
\newenvironment{acknowledgement}{\emph{Acknowledgement.}}
\DeclareMathOperator{\tr}{tr}
\newcommand\C{\mathbb C}
\newcommand\Z{\mathbb Z}
\newcommand\M{\mathcal{M}}
\newcommand\cA{\mathcal{A}}
\newcommand\cE{\mathcal{E}}
\newcommand\cN{\mathcal{N}}
\newcommand\cG{\mathcal{G}}
\newcommand\cV{\mathcal{V}}
\newcommand\cH{\mathcal{H}}
\newcommand\cK{\mathcal{K}}
\newcommand\cR{\mathcal{R}}
\newcommand\beq{\begin{equation}}
	\newcommand\eeq{\end{equation}}
\newcommand{\pa}[1]{\left( #1 \right)}
\newcommand{\tfd}{\pa{1- \tfrac{1}{\Delta}}}
\begin{document}
	
	\title[Entanglement Entropy for the XXZ-Spin Strip]{Entanglement Entropy Bounds for Droplet States of the XXZ Model on the Strip}
	
	\author{Christoph Fischbacher} 
	\address[C. Fischbacher]{Department of Mathematics, 
		Baylor University, Sid Richardson Bldg., 1410 S.\,4th Street, Waco, TX 76706, USA}
	\email{c\_fischbacher@baylor.edu}
	
	\author{Lee Fisher}
	\address[L. Fisher]{University of California, Irvine;
		Department of Mathematics;
		Irvine, CA 92697-3875,  USA}
	\email{lfisher2@uci.edu}
	
	\begin{abstract} The scaling behavior of the entanglement entropy of droplet states in Heisenberg spin-$1/2$ XXZ model defined on a strip of width $M$ under the presence of a non-negative background magnetic field is investigated. Without any assumptions on $V$, a logarithmically corrected area law is shown. Assuming that the values of $V$ are i.i.d.~ random variables, an area law in expectation is obtained.  
	\end{abstract}
	
	\maketitle
	\section{Introduction}
	
	In this paper, we will show bounds on the scaling behavior of the entanglement entropy (EE) for droplet states of the Heisenberg Spin-$1/2$ XXZ model defined on a strip of arbitrary width $M$. Without any further assumptions on the background magnetic field, we will show that the EE is bounded from above by a term that scales like the logarithm of the system size. On the other hand, under the presence of a random background magnetic field, we show an area law in expectation, thus indicating localization.
	This is a generalization of the work by Beaud and Warzel \cite{BW18}, who showed the same kind of result (logarithmic upper bound without and area law with randomness) for the one-dimensional chain (corresponding to $M=1$). Logarithmic upper bounds for the scaling of the EE in the one-dimensional chain have also been shown for higher-energy states \cite{ARFS} and also for higher local spins \cite{FO}. For the same models (spin $1/2$ and also higher local spins) localization results have also been obtained in \cite{BW17}, \cite{EKS2}, \cite{EKS}, and \cite{FisherDiss}. While by now, there are numerous results on EE bounds (to name a few of the most recent ones: \cite{MPS20, MS20, Pfeif, PfeifSpitz}), 
	there still seems to be a lack of such type of results for interacting many-particle systems in higher dimensions (as, for example, is pointed out in \cite{Sto20}). In this sense, we view our results as a first step towards this direction.
	
	We will proceed as follows:
	
	In Section \ref{sec:2}, we give the necessary background information to formulate our main results. We try to be as brief as possible, while providing  references to more detailed presentations.
	
	In Section \ref{sec:3}, we state our main results. The first result, Theorem \ref{thm:combinatorial}, is combinatorial and the key tool that enables us to show the subsequent bounds on the scaling behavior of the EE. These are stated in Theorems \ref{thm:logbound} and \ref{thm:arealaw}. 
	
	After this, in Section \ref{sec:4}, we present previous results that will be needed for the subsequent proofs of the main theorems: a deterministic bound on spectral projections (Proposition \ref{prop:spectralbound}), a preliminary deterministic estimate on the EE (Proposition \ref{prop:EEbound}), and an estimate on the expected value of the spectral projections (Lemma \ref{lemma:largedev}).
	
	We then introduce the notion of level sets in Section \ref{sec:5} -- these allow 
	
	Lastly, the proofs of the main theorems \ref{thm:combinatorial}, \ref{thm:logbound}, and \ref{thm:arealaw} are given in Section \ref{sec:6}.
	\section{The model and relevant concepts} \label{sec:2}
	\subsection{The XXZ Hamiltonian on general graphs}
	We consider the spin-$1/2$ XXZ model on general graphs and will prove estimates for the scaling behavior of the entanglement entropy of droplet states. Let $\cG=(\cV,\cE)$  be a countable, connected, and undirected graph of bounded maximal degree. We interpret the vertices in $\cV$ as the locations of the spins while the edge set $\cE$ describes where interactions between spins are present. Let $V:\cV\rightarrow [0,\infty)$ be an arbitrary non-negative function, which we refer to as the \emph{background potential}. Then the XXZ Hamiltonian $H_\cG(V)$ acting on the Hilbert space $\cH_\cG=\bigotimes_{u\in\cV}\C^2$ is formally given by
	\begin{equation} \label{eq:XXZ}
		H_\cG(V)=\sum_{\{u,v\}\in\cE}\left(-\frac{1}{\Delta}(S^1_uS^1_v+S_u^2S_v^2)+\frac{1}{4}-S^3_uS^3_v\right)+\sum_{u\in\cV}V(u)\left(\frac{1}{2}-S_u^3\right)\:,
	\end{equation}  
	where $\Delta>1$ and $S^1, S^2$, and $S^3$ are the spin-$1/2$ matrices given by
	\begin{equation}
		S^1=\begin{pmatrix}0 & 1/2\\1/2 & 0\end{pmatrix},\quad S^2=\begin{pmatrix}0 & -i/2\\i/2 & 0\end{pmatrix},\mbox{ and } S^3=\begin{pmatrix}1/2 & 0\\0 & -1/2\end{pmatrix}\:.
	\end{equation}
	For an arbitrary $A\in\C^{2\times 2}$, we use the usual convention that $A_u$ acts as $A$ on the component of the tensor product corresponding to $u\in\cV$ and as identity on all other factors. Note that if $\cG$ is an infinite graph, then it requires more effort to rigorously define $\cH_\cG$ and $H_\cG$. For more details on this, we refer to \cite{ARFS}, where an explicit construction and also a description of $H_\cG$ as a direct sum of discrete many-particle Schr\"odinger operators is given.    
	
	An important feature of $H_\cG(V)$ is the conservation of total magnetization or particle number. Defining the total particle number operator $\cN_\cG:=\sum_{u\in\cV}\left(\frac{1}{2}-S_u^3\right)$, it is easily verified that $[H_\cG,\cN_\cG]=0$. It is also easy to see that $\sigma(\cN_\cG)\subset\{0,1,2,\dots\}$. One then interprets the eigenspace of $\cN_\cG$ corresponding to an $N\in\sigma(\cN_\cG)$ as the ``$N$- down-spin" or ``$N$-particle subspace" -- denoted by $\cH_\cG^N$. Moreover, we introduce the notation $H_\cG^N(V):=H_\cG(V)\upharpoonright_{\cH_\cG^N}$, but, whenever convenient, our notation will suppress the dependence on $\cG$ and $V$ and we will just write $H, H^N,$ etc.

	\subsection{The XXZ Model on the strip} For this paper, we consider the XXZ model on the strip of length $2\ell$ and width $M$, i.e. $\cV=\{1,2,\dots,2\ell\}\times\{1,2,\dots,M\}$. 
	For $x, y\in \cV$, let $d(x,y):=|x-y|_1$. Then, the edge set $\cE$ of this graph is given by $\mathcal{E}=\{\{u,v\}\subset \cV: d(y,v)=1\}$. For later convenience, we also introduce the infinite strip of width $M$ given by $\mathcal{V}_\mathbb{Z}:=\mathbb{Z}\times\{1,2,\dots,M\}$.
	
	We will prove entanglement entropy bounds with respect to the bipartition $\cV=\Lambda_\ell\cup\Lambda_\ell^c$, where 
	\begin{equation}
		\Lambda_\ell:=\{1,2,\dots,\ell\}\times\{1,2,\dots,M\}\:.
	\end{equation}
	
	Moreover, for any $\delta>0$, introduce the \emph{droplet interval} (cf.~\cite{FS})
	\begin{equation}
		I_\delta=\left[\tfd M,\tfd (M+1-\delta)\right]\:.
	\end{equation} 
	For the results presented in this paper, we restrict our considerations to particle numbers $N$ that are of the form $N=kM$, where $k\in\{M,M+1,\dots\}$ and introduce the corresponding set of particle numbers given by
		\begin{equation} \mathfrak{N}:=\{kM:k\in\{M,M+1,M+2,\dots\}\}\:.
	\end{equation}
	 Our main results will be statements about droplet states that belong to $\cH_\cG^N(V)$, where $N\in\mathfrak{N}$. To this end, let $\mathcal{Q}_\delta(V):=\mbox{ran}\left(1_{I_\delta}(H_{\cG}(V))\right)$ denote the spectral subspace of $H_\cG(V)$ corresponding to $I_\delta$. Moreover, let $Q_\delta(V)$ denote the orthogonal projection onto $\mathcal{Q}_{\delta}(V)$.
	We refer to the elements of $\mathcal{Q}_\delta$ as \emph{droplet states}. We also define the restriction to the $N$-particle subspace: $Q_\delta^N=Q_\delta\upharpoonright_{\mathcal{H}^N}$ and $\mathcal{Q}_\delta^N:=\mbox{ran}(Q_\delta^N)$.
%
\begin{remark} Some of these assumptions are made for simplicity of presentation. With our methods, we can also obtain similar results for more general quasi-one-dimensional graphs such as $\mathbb{Z}\times\{1,2,\dots,M\}^d$ or more generally: $\mathbb{Z}\times \cK$, where $\cK$ is finite. In addition, we can also get bounds on the entanglement entropy of states corresponding to a slightly larger energy interval and also for other particle numbers (not just rectangular ones). The main difference is instead of the distance to the nearest rectangular configuration in Prop.~\ref{prop:spectralbound} has to be replaced by the distance to slightly more complicated configurations -- this just complicates the combinatorial arguments given in Theorem \ref{thm:combinatorial} and Lemma \ref{lemma:Ax-q1d} below.
	Lastly, it is also possible to consider bipartitions, where $|\Lambda_\ell|=q|\mathcal{V}|$ for any $q\in(0,1)$. Again, for simplicity we chose to only present the case $q=1/2$.
	
\end{remark}
	\subsection{Entanglement entropy} Let $\cH_{\Lambda_\ell}$ be the Hilbert space corresponding to the subregion $\Lambda_\ell\subset\cV$ and decompose $\cH_\cG=\cH_{\Lambda_\ell}\otimes\cH_{{\Lambda_\ell}^c}$. For any normalized state $\psi\in\cH_\cG$ with associated density matrix $\psi\langle\psi,\cdot\rangle$, let the reduced state $\rho_{\psi;\ell}:\cH_{\Lambda_\ell^c}\rightarrow\cH_{\Lambda_\ell^c}$ be given by the partial trace of $\psi\langle\psi,\cdot\rangle$ over $\cH_{\Lambda_\ell}$ (for more details cf.\ \cite{ARFS} or \cite{FO}). The Entanglement Entropy $\mathscr{E}(\psi;\ell)$ of $\psi$ with respect to the bipartition $\cV=\Lambda_\ell\cup\Lambda_\ell^c$ is then given by the von Neumann entropy of $\rho_{\psi;\ell}$, i.e.\
	\begin{equation}
		\mathscr{E}(\psi;\ell)=-\tr(\rho_{\psi;\ell}\log\rho_{\psi;\ell})\:.
	\end{equation}
	Moreover, for any $\alpha\in(0,1)$, we let $\mathscr{E}_\alpha(\psi;\ell):=\frac{1}{1-\alpha}\log\tr(\rho_{\psi;\ell}^\alpha)$ denote the $\alpha$-R\'enyi entropy of $\rho_{\psi;\ell}$. It is well-known that $\mathscr{E}\leq\mathscr{E}_\alpha$ for any $\alpha\in(0,1)$.
	
	\subsection{Configurations}
	
	To describe configurations of down-spins, we introduce the following notations: 
	Let $X=\{x_1,x_2,\dots,x_N\}$ and $Y=\{y_1,y_2,\dots,y_N\}$ both be subsets -- or ``$N$-particle configurations" -- of $\cV$, each with exactly $N$ elements (in the following, also referred to as ``particles"). The distance $d_N(\cdot,\cdot)$ between any such two $N$-particle configurations is given by (cf. \cite{FS})
	\begin{equation}
		d_N(X,Y)=\min_{\pi\in S_N} \sum_{i=1}^N d(x_i,y_{\pi(i)})\:,
	\end{equation}
	where $S_N$ is the group of permutations of $N$ elements.
	If $\mathcal{A, B}$ are sets of $N$-particle configurations, then $d_N(\mathcal{A},\mathcal{B}):=\min_{X\in\mathcal{A}, Y\in\mathcal{B}}d_N(X,Y)$.
	
	For any configuration $X\subset \cV$, we introduce
	\begin{equation}
		\chi_{\{X\}}:=\left(\prod_{u\in X} \left(\frac{1}{2}-S^3_u \right)\right)\left(\prod_{u\in \cV\setminus X} \left(\frac{1}{2}+S^3_u \right)\right)\:,
	\end{equation}
	which is a rank-one orthogonal projection onto the linear space spanned by the vector corresponding to the configuration with down-spins exactly at the vertices in $X$ and up-spins everywhere else (cf.\ also \cite{MMNP}).
	Moreover, for any set of such configurations $\mathcal{A}$, we introduce
	\begin{equation}
		\chi_\mathcal{A}:=\sum_{X\in\mathcal{A}}\chi_{\{X\}}\:.
	\end{equation}
	If $N$ is an integer multiple of $M$, i.e.~ $N=kM$, where $k\in\mathbb{N}$, let $R_i^N$ denote the following rectangular configuration of particles:
	\begin{equation}
		R_i^N=\{i,i+1,\dots, i+k-1\}\times\{1,2,\dots,M\}\:.
	\end{equation} 
For any $z\in\mathbb{Z}$, we will also employ the following notation to describe translations of rectangles: $R^N_i+z:=R_{i+z}^N$.
	Moreover, $\mathcal{R}^N$ denotes the set of \emph{all} such rectangular configurations on $\mathcal{V}$. Note that for any set of $N$-particle configurations $\mathcal{A}$, this implies
	\begin{equation}
		d_N(\cA,\cR^N)=\min_{z\in\mathbb{Z}}d_N(\cA, R+z)\:,
	\end{equation}
where $R\subset\mathcal{V}_\mathbb{Z}$ is an arbitrary rectangular configuration such that $|R|=N$. Lastly, given a non-negative background potential $V$, we to introduce the set of all rectangular configurations, where the potential is sufficiently small:
\begin{equation}
	\mathcal{R}_V^N=\left\{R\in\mathcal{R}_N: \sum_{x\in R}V(x)<1\right\}\:.
\end{equation}
	\section{Main results} \label{sec:3}
	
	We will now state our main theorems. The principal new contribution in this paper that enables us to show the subsequent theorems is the following combinatorial result:
	\begin{theorem} \label{thm:combinatorial}
		Let $R=R_i^N$ be any rectangular configuration with $N=|R|\in\mathfrak{N}$. Then for any $\mu>0$, we have
		\begin{equation}
			f(R,\mu):=\sum_{X\subset \mathcal{V}_\mathbb{Z}: |X|=N} e^{-\mu d_N(X,R)}\leq \left(1+\frac{2M}{\mu}\right)e^{\frac{4M}{\mu}}\:.
		\end{equation}
	\end{theorem}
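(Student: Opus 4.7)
The strategy is to reduce the strip problem to the one-dimensional case ($M=1$) via a distance-distorting embedding of $\mathcal{V}_{\mathbb{Z}}$ into $\mathbb{Z}$, and then estimate the resulting 1D sum through its generating function (an Euler product).

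First, consider the bijection $\phi : \mathcal{V}_{\mathbb{Z}} \to \mathbb{Z}$, $\phi(a,b) := Ma + b$. Under $\phi$, the rectangle $R_i^N$ is mapped to the interval $\{Mi+1,\ldots,M(i+k)\}$ of length $N = Mk$. The triangle inequality $|Mu + v| \leq M|u| + |v|$ gives $d(x,y) \geq \tfrac{1}{M}|\phi(x) - \phi(y)|$ for all $x,y \in \mathcal{V}_{\mathbb{Z}}$, and applying this to each pair in the optimal matching realizing $d_N(X,R)$ yields $d_N(X,R) \geq \tfrac{1}{M}\, d_N^{\mathbb{Z}}(\phi(X), \phi(R))$. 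Summing over $N$-configurations and using bijectivity of $\phi$,
\begin{equation*}
f(R,\mu) \;\leq\; \widetilde{f}(\phi(R),\, \mu/M), \qquad \widetilde{f}(J,\nu) \;:=\; \sum_{\substack{X' \subset \mathbb{Z} \\ |X'|=N}} e^{-\nu\, d_N^{\mathbb{Z}}(X',J)},
\end{equation*}
so it suffices to bound $\widetilde{f}(J,\nu)$ for any interval $J \subset \mathbb{Z}$ of length $N$, with $\nu = \mu/M$.

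For this, I use that in one dimension optimal transport respects ordering: writing $X = \{x_1 < \cdots < x_N\}$ and $J = \{r_1 < \cdots < r_N\}$ (so $r_{i+1} - r_i = 1$) and setting $y_i := x_i - r_i$, one has $d_N^{\mathbb{Z}}(X,J) = \sum_i |y_i|$, and $X \mapsto (y_1,\ldots,y_N)$ is a bijection onto non-decreasing integer sequences of length $N$. Parametrize such a sequence by its multiplicities $k_j := |\{i : y_i = j\}|$; then $\sum_j k_j = N$ and $\sum_i |y_i| = \sum_j |j|\, k_j$. Since $k_0 = N - \sum_{j\neq 0} k_j$ is automatically determined and forced into $[0,N]$, dropping the constraint $\sum_{j\neq 0} k_j \leq N$ (all terms are positive) gives the $N$-independent bound
\begin{equation*}
\widetilde{f}(J,\nu) \;\leq\; \prod_{j \neq 0} \frac{1}{1 - e^{-\nu |j|}} \;=\; \prod_{j \geq 1} \bigl(1 - e^{-\nu j}\bigr)^{-2}.
\end{equation*}

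Finally, I estimate the Euler product: expanding $-\log(1 - e^{-\nu j}) = \sum_{k\geq 1} e^{-\nu j k}/k$, swapping the two sums (Fubini, positive terms), and using $e^{\nu k} - 1 \geq \nu k$,
\begin{equation*}
\log \prod_{j \geq 1}\bigl(1 - e^{-\nu j}\bigr)^{-2} \;=\; 2\sum_{k\geq 1} \frac{1}{k(e^{\nu k}-1)} \;\leq\; \frac{2}{\nu}\sum_{k\geq 1}\frac{1}{k^2} \;=\; \frac{\pi^2}{3\nu} \;<\; \frac{4}{\nu},
\end{equation*}
since $\pi^2 < 12$. Substituting $\nu = \mu/M$ chains everything together to give $f(R,\mu) \leq e^{4M/\mu} \leq (1 + 2M/\mu)\, e^{4M/\mu}$, as required. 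No step poses a serious obstacle: the only items needing verification are the Lipschitz-type bound for $\phi$ (a short case analysis on the signs of $a - a'$ and $b - b'$) and the validity of relaxing the constraint on $k_0$ (immediate from positivity). The hypothesis $N \in \mathfrak{N}$ is actually not used; the bound holds for any rectangular configuration.
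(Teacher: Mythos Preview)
Your proof is correct and proceeds along a genuinely different line from the paper's.

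The paper works intrinsically on the strip: it decomposes each configuration $X$ into $X_{in}=X\cap R$ and $X_{out}=X\setminus R$, bounds $d_N(X,R)$ from below by the total $\ell_1$-distance of the particles in $X_{out}$ and of the holes $R\setminus X_{in}$ to the internal boundary $\partial^{in}R$ (Lemma~\ref{lem:est-w-levels}), and then uses a level-respecting enumeration together with the estimate $L_R(e_n)\ge \frac{1}{2M}(n-N)$ to reduce to nested one-dimensional sums. These are bounded by iterated integrals, producing $1/(j!)^2$ factors that are summed via the exponential series to give $(1+2M/\mu)e^{4M/\mu}$.

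Your route is more direct: the Lipschitz bijection $\phi(a,b)=Ma+b$ immediately converts the problem into a bona fide one-dimensional transport sum at the rescaled rate $\nu=\mu/M$; the sorted-matching formula then makes $\widetilde f(J,\nu)$ exactly a sum over non-decreasing integer sequences, which you dominate by the Euler product $\prod_{j\ge1}(1-e^{-\nu j})^{-2}$ after dropping the constraint on $k_0$. The Basel-type estimate $\sum 1/(k(e^{\nu k}-1))\le \pi^2/(6\nu)$ closes the argument. This yields the slightly sharper bound $f(R,\mu)\le e^{4M/\mu}$ (the prefactor $(1+2M/\mu)$ is not needed), and, as you note, the restriction $k\ge M$ implicit in $N\in\mathfrak N$ plays no role. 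The paper's level-set decomposition has the advantage of being formulated in a way that adapts to more general ambient graphs where a convenient Lipschitz bijection to $\mathbb Z$ may not exist, but on the strip your argument is both shorter and quantitatively stronger.
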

\begin{remark}
	
\end{remark}
	The next theorem is a logarithmic upper bound for droplet states of $H^N$, where $N\in\mathfrak{N}$, which is independent of the background potential. 
	\begin{theorem} \label{thm:logbound}
		For any non-negative $V$ and any $\delta>0$ we have 
		\begin{equation} \label{eq:logbound}
			\limsup_{\ell\rightarrow\infty}\left(\frac{\sup_{N\in\mathfrak{N}}\sup\left\{\mathscr{E}(\psi;\ell):{\psi\in \mathcal{Q}^N_\delta, \|\psi\|=1}\right\}}{\log\ell} \right)\leq 1\:.
		\end{equation}

	\end{theorem}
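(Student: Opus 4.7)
The plan is to chain together the three preparatory tools already on the table: the deterministic EE bound of Proposition \ref{prop:EEbound}, the spectral projection decay of Proposition \ref{prop:spectralbound}, and the new combinatorial estimate of Theorem \ref{thm:combinatorial}.

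First, I would invoke Proposition \ref{prop:EEbound} to reduce the entanglement entropy of any normalized droplet state $\psi \in \mathcal{Q}_\delta^N$ to a bound of the schematic form
$$\mathscr{E}(\psi;\ell) \leq \log\bigl(C(\delta,M)\, T_\ell^N\bigr) + O(1), \qquad T_\ell^N := \tr\bigl(Q_\delta^N \chi_{\mathcal{B}_\ell}\bigr),$$
where $\mathcal{B}_\ell$ denotes the set of ``straddling'' $N$-particle configurations, i.e.\ those $X$ with both $X \cap \Lambda_\ell \neq \emptyset$ and $X \cap \Lambda_\ell^c \neq \emptyset$. The conceptual content is that only configurations with support on both sides of the cut contribute to the Schmidt rank, so $\Rank(\rho_{\psi;\ell})$ is effectively controlled by $T_\ell^N$ and then $\mathscr{E} \leq \log\Rank(\rho_{\psi;\ell}) \leq \log T_\ell^N + O(1)$. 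Producing the sharp constant $1$ in front of $\log\ell$ therefore amounts to showing $T_\ell^N \leq C(\delta, M)\, \ell$, uniformly in $N \in \mathfrak{N}$ and in $V$.

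Second, I would estimate $T_\ell^N$ using spectral decay. Proposition \ref{prop:spectralbound} supplies an exponential weight of the form $e^{-\mu \, d_N(X, \mathcal{R}^N)}$ on matrix elements of $Q_\delta^N$ with respect to configuration states $\chi_{\{X\}}$, after choosing $\mu$ small enough (depending on $\delta$ and $\Delta$). Assigning to each straddling $X$ a nearest rectangle $R(X) \in \mathcal{R}^N$, I would split
$$T_\ell^N \;\leq\; \sum_{R \in \mathcal{R}^N}\;\sum_{\substack{X \in \mathcal{B}_\ell \\ R(X)=R}} C\, e^{-2\mu \, d_N(X,R)}.$$
Theorem \ref{thm:combinatorial} controls the inner sum by a constant $C'(M,\mu)$, crucially independent of both $R$ and $N$. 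What remains is to count the rectangles $R$ which contribute non-negligibly to the outer sum.

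Third, I would organize these rectangles by their position relative to the bipartition cut. Writing $k = N/M$, a rectangle $R_i^N$ is itself straddling precisely when $\max(1, \ell - k + 2) \leq i \leq \ell$, producing at most $\min(k, \ell) \leq \ell$ ``genuinely straddling'' rectangles. All remaining rectangles contribute only via configurations at positive distance from $R$, and summing $e^{-2\mu\,\mathrm{dist}}$ over their offsets to the cut produces a convergent geometric series of size $O(1/\mu)$. Combining these pieces yields $T_\ell^N \leq C(\delta,M)\,\ell$, uniformly in $N$ and $V$, hence $\log T_\ell^N \leq \log\ell + O(1)$, and substituting into the first step proves the theorem upon dividing by $\log\ell$ and passing to $\ell \to \infty$. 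The main obstacle I anticipate is the regime $N \gg \ell M$, in which every rectangle straddles the cut so that the ``genuinely straddling'' count saturates at $\ell$; here one must verify that the tail contributions from distant-straddling configurations do not inflate with $N$. This is precisely where the $N$-independence of the combinatorial constant in Theorem \ref{thm:combinatorial} is indispensable, and it is the reason the combinatorial identity is the principal new ingredient enabling the proof.
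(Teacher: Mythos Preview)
Your first step is where the argument breaks. Proposition~\ref{prop:EEbound} does not reduce $\mathscr{E}(\psi;\ell)$ to $\log T_\ell^N + O(1)$ with $T_\ell^N = \tr(Q_\delta^N\chi_{\mathcal{B}_\ell})$; what it actually gives is a bound on $\tr(\rho_{\psi;\ell}^\alpha)$ for $\alpha\in(0,1)$, namely $6 + 2\sum_j\sum_{X\subset\Lambda_\ell^c,\,|X|=j}\|\chi_{\mathcal{A}_{X,N}}\psi\|^{2\alpha}$, and the passage to $\mathscr{E}$ goes through $\mathscr{E}\leq\mathscr{E}_\alpha = \frac{1}{1-\alpha}\log\tr(\rho^\alpha)$. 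Your quantity $T_\ell^N$ is neither a Schmidt-rank bound for $\rho_{\psi;\ell}$ (the Schmidt rank of a droplet state is not controlled by any trace of $Q_\delta^N$ against a configuration projection---exponential decay of matrix elements limits their \emph{size}, not their \emph{number}) nor the right-hand side of \eqref{eq:alphatrace}, so the inequality $\mathscr{E}\leq\log T_\ell^N+O(1)$ is unjustified and the rest of the outline does not connect back to $\mathscr{E}$.

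The paper's route keeps $\alpha\in(0,1)$ as a free parameter throughout. One bounds $\|\chi_{\mathcal{A}_{X,N}}\psi\|^{2\alpha}\leq C^{2\alpha}e^{-2\alpha\mu\,d_N(\mathcal{A}_{X,N},\mathcal{R}^N)}$ via Proposition~\ref{prop:spectralbound}, and then Lemma~\ref{lemma:Ax-q1d} (which internally carries out the organization by rectangle position you sketch in your third step) together with Theorem~\ref{thm:combinatorial} shows that for each fixed $j$ the sum $\sum_{X\subset\Lambda_\ell^c,\,|X|=j}e^{-2\alpha\mu\,d_N(\mathcal{A}_{X,N},\mathcal{R}^N)}$ is bounded by a constant depending only on $M,\alpha,\mu$. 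The factor of $\ell$ then arises from summing this uniform bound over the $M\ell-1$ values of $j$, not from counting straddling rectangles. This yields $\tr(\rho^\alpha)\leq C_\alpha\,\ell$, hence $\mathscr{E}\leq\frac{1}{1-\alpha}(\log\ell+O_\alpha(1))$, and the sharp constant $1$ in \eqref{eq:logbound} is obtained only at the very end by letting $\alpha\downarrow 0$. Without this $\alpha$-mechanism you cannot reach the constant $1$.
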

	The last result treats the case when $V$ is random:
	\begin{theorem} \label{thm:arealaw}
		Let $\delta>0$ and let $\{\nu_x\}_{x\in\mathcal{V}_\mathbb{Z}}$ be a sequence of independent, identically distributed non-negative random variables such that $\mathbb{P}(\nu=0)\neq 1$. Let the random background potential be given by $V_\omega(x)=\nu_x$. Then, there exists a constant $K=K(\Delta,\delta)<\infty$ such that
		\begin{equation}
			\sup_\ell\mathbb{E}\left[{\sup_{N\in\mathfrak{N}}\sup\left\{\mathscr{E}(\psi;\ell):{\psi\in \mathcal{Q}^N_\delta, \|\psi\|=1}\right\}}\right]\leq K\:.
		\end{equation}
	\end{theorem}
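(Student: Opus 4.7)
The plan is to combine the four ingredients laid out in the preceding sections: the deterministic EE bound of Proposition \ref{prop:EEbound}, the spectral decay estimate of Proposition \ref{prop:spectralbound}, the randomness input of Lemma \ref{lemma:largedev}, and the new combinatorial identity of Theorem \ref{thm:combinatorial}. Together these should give the desired $\ell$-uniform bound by reducing the EE to a sum over configurations that straddle the boundary of $\Lambda_\ell$ and then showing that each such configuration, after averaging over $V_\omega$, carries exponentially small weight.

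Concretely, I fix $N = kM \in \mathfrak{N}$ and a normalized $\psi \in \mathcal{Q}_\delta^N$. Proposition \ref{prop:EEbound} should produce an upper bound on $\mathscr{E}(\psi;\ell)$ (via an $\alpha$-R\'enyi surrogate) in terms of a sum $\sum_X$ of quantities of the form $\|\chi_{\{X\}} Q_\delta^N\|^\beta$ for some $\beta>0$, taken over $N$-configurations $X$ that meet both $\Lambda_\ell$ and $\Lambda_\ell^c$; since this bound is manifestly $\psi$-independent, the supremum over normalized droplet states passes through for free. Proposition \ref{prop:spectralbound} then controls each summand by $e^{-\mu d_N(X,\mathcal{R}_V^N)}$ with $\mu = \mu(\Delta,\delta)>0$. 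Writing $\mathbf{1}\{d_N(X,\mathcal{R}_V^N)\le r\}$ as a union over size-$N$ rectangles and taking the expectation, Lemma \ref{lemma:largedev} yields $\P(R \in \mathcal{R}_V^N) \le C_0 e^{-c N}$ for some $c=c(\nu)>0$, a standard large-deviation consequence of the hypothesis $\P(\nu=0)\neq 1$.

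At this point I exchange the $X$- and $R$-sums and apply Theorem \ref{thm:combinatorial}, which bounds the inner sum $\sum_X e^{-\mu d_N(X,R)}$ by a constant depending only on $M$ and $\mu$ -- crucially independent of both $N$ and $\ell$. The surviving $R$-sum is effectively supported on rectangles whose particle set meets both halves of the bipartition, of which there are $O(k)=O(N/M)$ (together with an $O(1)$-neighborhood arising from the exponential tail in $d_N$). The per-$N$ expected bound is therefore $\lesssim (N/M)\, e^{-c N}$. Using $\sup_N X_N \le \sum_N X_N$ for nonnegative $X_N$, I pull $\sup_{N\in\mathfrak{N}}$ inside the expectation and sum; the series $\sum_{k\ge M} k\, e^{-c k M}$ converges to a finite constant $K=K(\Delta,\delta)$ which is independent of $\ell$, yielding the stated area law in expectation.

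The main obstacle I anticipate is the first step: formulating Proposition \ref{prop:EEbound} precisely enough that only boundary-crossing configurations appear, with an exponent $\beta$ small enough that the subsequent use of Proposition \ref{prop:spectralbound} is not wasted. A secondary subtlety is bookkeeping in the $R$-sum: rectangles lying entirely in $\Lambda_\ell$ or entirely in $\Lambda_\ell^c$ correspond to droplet configurations localized on a single side of the bipartition and should contribute nothing to the EE, and one must verify that the combinatorial weight from $X$'s near such $R$'s really is killed either by the straddling condition on $X$ or by the $e^{-c N}$ factor -- so that the $\ell$-dependence ultimately drops out and a true area law, rather than a logarithmic or linear growth, is obtained.
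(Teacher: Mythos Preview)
Your outline takes a route that is genuinely different from the paper's, and while the overall strategy is plausible it has a real gap at the point where $\ell$-independence must be won.

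\textbf{What the paper actually does.} The paper never uses a bound of the form $\P(R\in\cR_V^N)\le e^{-cN}$, and Lemma~\ref{lemma:largedev} does not say this. Instead, the paper splits the power:
\[
\|\chi_{\cA_{X,N}}Q_\delta^N(V_\omega)\|^{2\alpha}
\;\le\;
\|\chi_{\cA_{X,N}}Q_\delta^N(V_\omega)\|^{\alpha}\cdot C^\alpha e^{-\alpha\mu\, d_N(\cA_{X,N},\cR^N)},
\]
keeping one factor random and bounding the other deterministically (distance to \emph{all} rectangles $\cR^N$, not $\cR_V^N$). Taking expectations and applying Jensen, Lemma~\ref{lemma:largedev} gives $\E\big[\|\chi_{\cA_{X,N}}Q_\delta^N\|^{\alpha}\big]\le \tilde C^\alpha\lambda^{\alpha j}$ with $\lambda<1$, i.e.\ decay in $j=|X|$, not in $N$. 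The deterministic factor is then summed over $X$ (for each fixed $j$) using Lemma~\ref{lemma:Ax-q1d} together with Theorem~\ref{thm:combinatorial}, yielding a bound independent of $j$, $N$ and $\ell$. The remaining $\sum_j (\lambda^\alpha)^j$ is a geometric series. No sum over $N$ is needed; the bound is uniform in $N$ from the start.

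\textbf{Where your argument is incomplete.} Your mechanism replaces $\lambda^j$-decay by $e^{-cN}$-decay coming from $\P(R\in\cR_V^N)$. That large-deviation bound is true, but it is not Lemma~\ref{lemma:largedev}. More importantly, after you write $e^{-\mu d_N(W,\cR_V^N)}\le\sum_R \mathbf{1}\{R\in\cR_V^N\}e^{-\mu d_N(W,R)}$ and exchange sums, you face $\sum_{R\in\cR^N}\sum_{W\ \mathrm{b.c.}} e^{-\mu d_N(W,R)}$. Theorem~\ref{thm:combinatorial} alone only gives $\sum_W\le f(R,\mu)\le C(M,\mu)$ uniformly in $R$, which leaves $|\cR^N|\asymp\ell$ rectangles and produces an $\ell$-dependent bound. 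Your claim that the $R$-sum is ``effectively supported'' on the $O(k)$ straddling rectangles is precisely the nontrivial content of Lemma~\ref{lemma:Ax-q1d}: one must exploit that boundary-crossing $W$'s are far from rectangles deep inside $\Lambda_\ell$ or $\Lambda_\ell^c$, and carry out the level-set partition of translates $R+z$ to turn this into a convergent geometric tail. You acknowledge this as a ``secondary subtlety'', but it is in fact the place where the area law is decided; without invoking or reproving Lemma~\ref{lemma:Ax-q1d} your argument does not close.
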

	
	\section{Previous results needed for the proofs} \label{sec:4}
	
	To prove our main results, we need several propositions that have been shown previously. Some of these have been stated for more general graphs, some of these were shown for the one-dimensional chain (corresponding to $M=1$). Since the modifications that have to be made to treat the case $M>1$ are straightforward, we will only provide references to previously given proofs for Propositions \ref{prop:spectralbound} and \ref{prop:EEbound}. A version of Lemma \ref{lemma:largedev} can be found in \cite[Lemma 1.2]{BW17}, however our result is more general and follows from a simpler argument.
	\begin{proposition} \label{prop:spectralbound} Let $N\in\mathfrak{N}$, $\delta>0$, and $V$ an arbitrary non-negative background potential. Then, for any set of $N$-particle configurations $\mathcal{A}$, we obtain the following bound
		\begin{equation}
			\|\chi_\mathcal{A}Q_\delta^N(V)\|\leq C \exp(-\mu d_N(\mathcal{A,R}^N_V))\leq C \exp(-\mu d_N(\mathcal{A,R}^N))\:,
		\end{equation}
		where the constants $C$ and $\mu$ are given by
		\begin{equation}
			C=\frac{3\sqrt{5}}{2}\frac{(2M+1)^{3/2}}{\min\{1,\delta^{3/2}\}}\quad\mbox{ and }\quad \mu=\frac{1}{2}\log\left(1+\frac{\delta\Delta}{4M+2}\right)
		\end{equation}
	and $\mathcal{R}_V$ is the set of rectangular configurations
	\end{proposition}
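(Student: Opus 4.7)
The plan is to establish this bound by a Combes--Thomas-type argument applied to the spectral projection $Q_\delta^N$ in the configuration basis, following the strategy used for $M=1$ in \cite{FS,BW17}; passing from $M=1$ to general $M$ requires recalibrating the underlying combinatorial energy estimate but leaves the functional-analytic scaffolding essentially intact. The first and most substantive step is a diagonal energy lower bound: for every $N$-particle configuration $X\subset \cV$ with $N \in \mathfrak{N}$ one would prove
\[
\langle \chi_{\{X\}}, H \chi_{\{X\}}\rangle \;=\; \tfrac{1}{2}|\partial_\cE X| + \sum_{u\in X} V(u) \;\geq\; \tfd M + \kappa \cdot d_N(X, \cR^N_V),
\]
for some explicit $\kappa = \kappa(\Delta, M, \delta) > 0$. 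The geometric observation is that on the strip $|\partial_\cE X|$ is minimized, over $|X| \in \mathfrak{N}$, by the rectangular configurations $\cR^N$ with value $2M$, and non-rectangularity increases $|\partial_\cE X|$ linearly in the $d_N$-displacement from $\cR^N$. The potential contribution takes over whenever $X$ is close to a rectangle $R \in \cR^N \setminus \cR^N_V$: then $\sum_{u\in R} V(u) \geq 1$ must be paid, and the extent of $\supp V$ along the strip in turn controls $d_N(X, \cR^N_V)$.

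Once the diagonal estimate is available, introduce the diagonal multiplier $M_\mu \chi_{\{X\}} := e^{\mu d_N(X, \cR^N_V)} \chi_{\{X\}}$. Since the XY part of $H$ acts via single-particle nearest-neighbor hops and the weight $d_N(\cdot, \cR^N_V)$ is $1$-Lipschitz under such hops, $M_\mu H M_\mu^{-1} - H$ is a purely off-diagonal, bounded perturbation of norm $O\!\pa{\tfrac{M}{\Delta}(e^\mu - 1)}$. Calibrating $\mu = \tfrac{1}{2}\log(1 + \tfrac{\delta\Delta}{4M+2})$ balances this Combes--Thomas loss against the linear energy gain from the previous step, so that for each $E \in I_\delta$ the operator $M_\mu(H-E) M_\mu^{-1}$ is boundedly invertible on $\mbox{ran}(\chi_\cA)$ with a bound depending only on $\Delta, M, \delta$. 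Writing $Q_\delta^N$ as a Riesz contour integral $\tfrac{1}{2\pi i}\oint_\gamma (z-H)^{-1}\,dz$ around $I_\delta$ and applying the conjugated resolvent bound gives $\|M_\mu \chi_\cA Q_\delta^N M_\mu^{-1}\| \leq C$; the definition of $M_\mu$ together with $\chi_\cA$'s support on configurations at $d_N$-distance $\geq d_N(\cA, \cR^N_V)$ from $\cR^N_V$ then produces the factor $e^{-\mu d_N(\cA, \cR^N_V)}$ claimed in the statement. The weaker second inequality is immediate from $\cR^N_V \subset \cR^N$.

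The main obstacle is the diagonal energy estimate: proving a universal linear lower bound on $\langle \chi_{\{X\}}, H \chi_{\{X\}}\rangle - \tfd M$ in terms of $d_N(X, \cR^N_V)$, uniformly in $V \geq 0$. The difficulty is that this distance can be driven either by the edge-boundary or by the potential term, and in the latter case one must argue that when $\supp V$ excludes a long stretch of rectangular translates from $\cR^N_V$, the resulting growth of $d_N(X, \cR^N_V)$ is matched by a commensurate penalty in $\sum V(u)$ or $\tfrac{1}{2}|\partial_\cE X|$, so that the constant $\kappa$ does not degenerate. All remaining steps are essentially formal once this estimate is in place, and the $(2M+1)^{3/2}/\min\{1,\delta^{3/2}\}$ behavior of $C$ arises from the contour-length factor in the Riesz integral combined with the dimension-dependent degree bound used in Step~2.
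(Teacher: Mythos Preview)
Your Combes--Thomas strategy is the right scaffolding and is indeed what underlies \cite[Lemma~4.1]{ARFS}, which the paper simply cites together with the edge-surface minimization \cite[Lemma~B.2]{FS}. However, the pointwise diagonal estimate you place at the center of the argument,
\[
\tfrac{1}{2}|\partial_\cE X| + \sum_{u\in X} V(u) \;\geq\; \tfd M + \kappa\, d_N(X, \cR^N_V),
\]
is false, and you correctly flagged it as the main obstacle without resolving it. Take $V\equiv 0$ and let $X$ be the disjoint union of two rectangular blocks of sizes $k_1 M$ and $k_2 M$ (with $k_1+k_2=k$) separated by a horizontal gap of length $D$: then $|\partial_\cE X|=4M$ independently of $D$, while $d_N(X,\cR^N)\to\infty$ as $D\to\infty$, so no $\kappa>0$ survives. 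On the potential side the same phenomenon occurs: a single column carrying large $V$ can push $d_N(X,\cR^N_V)$ to order $N$ for a rectangular $X$ whose diagonal energy stays $O(M)$. A related error is the claim that $M_\mu H M_\mu^{-1}-H$ has operator norm $O\!\pa{\tfrac{M}{\Delta}(e^\mu-1)}$: a configuration $X$ has $|\partial_\cE X|$ neighbours in the configuration graph, so the conjugation perturbation is only \emph{relatively} bounded with respect to the Ising energy $W$, not absolutely bounded.

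The argument that actually works replaces these two false ingredients by their correct counterparts, and the exponential decay then comes from the interplay between them rather than from any linear energy--distance relation. One writes $H^N=\tfd W+V-\tfrac{1}{2\Delta}\cL$ with $-\cL\geq 0$ the configuration-graph Laplacian, whose diagonal equals $2W$. On the diagonal side one only needs the \emph{uniform} gap $\tfd W(X)+\sum_{u\in X}V(u)\geq \tfd(M+1)$ for every $X\notin\cR^N_V$: either $X$ is non-rectangular and $W(X)\geq M+1$ by \cite[Lemma~B.2]{FS}, or $X\in\cR^N\setminus\cR^N_V$ and the potential contributes at least $1\geq 1-\tfrac{1}{\Delta}$. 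Conjugation by $M_\mu$ perturbs only the Laplacian piece, and the perturbation is controlled by $\tfrac{e^\mu-1}{\Delta}W$ as a quadratic form; choosing $\mu$ as in the statement keeps $M_\mu(H^N-E)M_\mu^{-1}$ bounded below on $\mathrm{ran}(\chi_{(\cR^N_V)^c})$ for every $E\in I_\delta$, and the claimed decay in $d_N(\cA,\cR^N_V)$ follows from the resulting resolvent bound.
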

	\begin{proof}
		This follows from the general result in \cite[Lemma 4.1]{ARFS}, using that -- according to \cite[Lemma B.2]{FS} -- rectangular configurations minimize the edge surface on the strip. See also \cite[Remark 3.4]{ARFS} concerning adding arbitrary non-negative background potentials. 
	\end{proof} 
	In order to state our next result, we introduce the set of configurations $\mathcal{A}_{X,N}$, which for any $N\in\mathbb{N}$, $X\subset\Lambda_\ell^c$, where $|X|\leq N$, is given by
	\begin{equation}
		\mathcal{A}_{X,N}=\{X\cup Z: Z\subset\Lambda_\ell, |X|+|Z|=N\}\:.
	\end{equation}

	\begin{proposition} \label{prop:EEbound}
		Let $\delta>0$, $N\in\mathfrak{N}$, and $\alpha\in(0,1)$. Then, for any normalized $\psi\in\mathcal{Q}^N_\delta$, we get the following estimate:
		\begin{equation} \label{eq:alphatrace}
			\tr\left(\rho_{\psi;\ell}^\alpha\right)\leq 6+2\sum_{j=1}^{|\Lambda_\ell^c|-1}\sum_{X\subset\Lambda_\ell^c: |X|=j}\left\|\chi_{\mathcal{A}_{X,N}}\psi\right\|^{2\alpha}\:.
		\end{equation}
	\end{proposition}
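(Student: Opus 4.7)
The plan is to exploit particle-number conservation to block-diagonalize the reduced density matrix $\rho_{\psi;\ell}$ in the configuration basis, then bound each block using Schur's majorization theorem together with the concavity of $x\mapsto x^\alpha$. The droplet condition $\psi\in\mathcal{Q}^N_\delta$ plays no direct role here; this is a general purely linear-algebraic bound whose usefulness comes when it is later combined with the operator-level decay from Proposition~\ref{prop:spectralbound}.

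First I would expand $\psi$ across the bipartition. Since $\psi\in\mathcal{H}^N$, and since $\chi_{\mathcal{A}_{X,N}}\psi$ naturally factors as $|X\rangle\otimes\phi_X$ with $\phi_X$ in the $(N-|X|)$-particle sector of $\mathcal{H}_{\Lambda_\ell}$, one has the orthogonal decomposition $\psi=\sum_{X\subset\Lambda_\ell^c}\chi_{\mathcal{A}_{X,N}}\psi$. A direct calculation of the partial trace over $\mathcal{H}_{\Lambda_\ell}$ then shows that $\rho_{\psi;\ell}$ has matrix elements $(\rho_{\psi;\ell})_{X,X'}=\langle\phi_{X'},\phi_X\rangle$ in the configuration basis $\{|X\rangle:X\subset\Lambda_\ell^c\}$, and particle-number conservation forces $\rho_{\psi;\ell}$ to be block-diagonal in $j=|X|$. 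In particular the diagonal entries read $(\rho_{\psi;\ell})_{X,X}=\|\chi_{\mathcal{A}_{X,N}}\psi\|^2$.

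Second, on each block $\rho_j$, Schur's theorem asserts that the diagonal sequence of $\rho_j$ is majorized by its eigenvalue sequence. Because $x^\alpha$ is concave on $[0,\infty)$ for $\alpha\in(0,1)$, Karamata's inequality reverses the order of the inequality for sums of $f$-values and yields
\[
\tr(\rho_j^\alpha)\;\leq\;\sum_{X\subset\Lambda_\ell^c:\,|X|=j}\|\chi_{\mathcal{A}_{X,N}}\psi\|^{2\alpha}.
\]
Summing over $j$ and isolating the extremal sectors $j=0$ (only $X=\emptyset$) and $j=|\Lambda_\ell^c|$ (only $X=\Lambda_\ell^c$), each of rank at most one and so contributing at most $1$ to $\tr(\rho_{\psi;\ell}^\alpha)$, already produces an estimate of the shape $2+\sum_{j=1}^{|\Lambda_\ell^c|-1}\sum_X\|\chi_{\mathcal{A}_{X,N}}\psi\|^{2\alpha}$.

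The main remaining obstacle is obtaining the explicit prefactor $2$ and the constant $6$ that appear in \eqref{eq:alphatrace}. This is a refinement of the boundary analysis rather than a new idea: one splits the $\alpha$-trace according to whether the block eigenvalues lie above or below a fixed threshold, exploits the fact that the reduced states on $\mathcal{H}_{\Lambda_\ell}$ and $\mathcal{H}_{\Lambda_\ell^c}$ share the same nonzero spectrum so that whichever side of the bipartition is more convenient can be used in each sector, and carefully absorbs the contributions of the low-rank sectors near $j=0$ and $j=|\Lambda_\ell^c|$ into the additive constant. The bookkeeping is the main technical chore; no deeper input is required beyond the combination of Schur majorization, concavity of $x\mapsto x^\alpha$, and the trivial normalization bound $\|\chi_{\mathcal{A}_{X,N}}\psi\|\leq\|\psi\|=1$.
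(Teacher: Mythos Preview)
Your Schur--Horn/Karamata argument is correct and in fact overshoots the target: once you have
\[
\tr(\rho_j^\alpha)\ \le\ \sum_{X\subset\Lambda_\ell^c:\,|X|=j}\|\chi_{\mathcal{A}_{X,N}}\psi\|^{2\alpha}
\]
for every block, summing over $j$ and separating the two rank-one sectors $j=0$ and $j=|\Lambda_\ell^c|$ already yields
\[
\tr(\rho_{\psi;\ell}^\alpha)\ \le\ 2+\sum_{j=1}^{|\Lambda_\ell^c|-1}\sum_{X\subset\Lambda_\ell^c:\,|X|=j}\|\chi_{\mathcal{A}_{X,N}}\psi\|^{2\alpha},
\]
which is strictly stronger than \eqref{eq:alphatrace} because all summands are nonnegative. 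So the ``remaining obstacle'' you describe in the last paragraph does not exist: there is no bookkeeping left to do, no threshold splitting needed, and no appeal to the symmetry of the two reduced spectra required. You can simply stop after the second step.

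The paper does not give its own proof but cites the argument in \cite[Section~5]{ARFS}, which proceeds somewhat differently (via an eigenvalue threshold split and use of both reduced states) and produces the specific constants $6$ and the prefactor $2$. Your majorization route is more direct and gives better constants; the only thing it costs you is invoking Schur's theorem and Karamata's inequality, whereas the ARFS argument is more hands-on. Either way, as you correctly note, the hypothesis $\psi\in\mathcal{Q}_\delta^N$ is irrelevant for this proposition; only $\psi\in\mathcal{H}^N$ with $\|\psi\|=1$ is used.
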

	\begin{proof}
		This follows from an argument completely analogous to the one given in \cite[Section 5, up to and including Equation (5.14)]{ARFS}.
	\end{proof}
	\begin{lemma} \label{lemma:largedev}
	Let $\{\nu_x\}_{x\in\mathcal{V}_\mathbb{Z}}$ be a sequence of independent, identically distributed non-negative random variables such that $\mathbb{P}(\nu=0)\neq 1$. Let the random background potential be given by $V_\omega(x)=\nu_x$. Then, there exist constants $\tilde{C}=\tilde{C}(\delta,\Delta,\mathbb{P})$ and $\lambda=\lambda(\delta,\Delta,\mathbb{P})<1$ such that for any $X\in\Lambda_\ell^c$, where $|X|=j$, we have
	\begin{equation}
		\mathbb{E}\left(\|\chi_{\mathcal{A}_{X,N}}Q_\delta^N(V_\omega)\|\right)\leq \tilde{C} \lambda^j\:.
	\end{equation}
\end{lemma}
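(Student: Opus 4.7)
The plan is to combine the deterministic decay provided by Proposition~\ref{prop:spectralbound} with a Chernoff-type estimate for $\mathbb{P}(R\in\mathcal{R}_{V_\omega}^N)$. Every $Y\in\mathcal{A}_{X,N}$ contains $X$, so $d_N(Y,R)\geq|X\setminus R|$ for any rectangle $R$, and hence $d_N(\mathcal{A}_{X,N},\mathcal{R}_{V_\omega}^N)\geq\min_{R\in\mathcal{R}_{V_\omega}^N}|X\setminus R|$. Writing the resulting exponential of a minimum as the maximum of exponentials and bounding it by the sum over rectangles, Proposition~\ref{prop:spectralbound} gives the deterministic estimate
\[
\|\chi_{\mathcal{A}_{X,N}}Q_\delta^N(V_\omega)\|\leq C\sum_{R\in\mathcal{R}^N}e^{-\mu|X\setminus R|}\,\mathbf{1}_{R\in\mathcal{R}_{V_\omega}^N}.
\]

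Taking expectations reduces the task to bounding $\mathbb{P}(R\in\mathcal{R}_{V_\omega}^N)=\mathbb{P}(\sum_{x\in R}V_\omega(x)<1)$. Since $\mathbb{P}(\nu>0)>0$, there exists $s>0$ with $\rho:=\mathbb{E}(e^{-s\nu})<1$; an exponential Markov inequality applied to the i.i.d.\ sum, combined with the non-negativity of $V_\omega$, yields
\[
\mathbb{P}(R\in\mathcal{R}_{V_\omega}^N)\leq e^{s}\rho^{|X\cap R|}\qquad\text{and also}\qquad \mathbb{P}(R\in\mathcal{R}_{V_\omega}^N)\leq e^{s}\rho^{|R|}=e^{s}\rho^{N}.
\]
Inserting the first bound and writing $|X\setminus R|=j-|X\cap R|$ yields
\[
\mathbb{E}\bigl[\|\chi_{\mathcal{A}_{X,N}}Q_\delta^N(V_\omega)\|\bigr]\leq Ce^{s}\,e^{-\mu j}\sum_{R\in\mathcal{R}^N}(e^{\mu}\rho)^{|X\cap R|},
\]
and taking $\mu$ small enough that $\eta:=e^{\mu}\rho<1$ furnishes an exponentially small weight in $|X\cap R|$.

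The main obstacle is the combinatorial control of the remaining sum over rectangles: those disjoint from $X$ each contribute $\eta^{0}=1$, so a naive bound would scale with the total number of rectangles (hence with $\ell$). To handle this I would employ the sharper deterministic lower bound $d_N(\mathcal{A}_{X,N},R)\geq\sum_{x\in X\setminus R}d(x,R)$, which is inherent in the matching argument underlying Proposition~\ref{prop:spectralbound} and yields additional geometric decay in the distance from $R$ to $X$. The sum then splits naturally into three regimes: (i) the at most $k=N/M$ rectangles containing $X$ entirely, for which one uses instead the complementary bound $\mathbb{P}(R\in\mathcal{R}_{V_\omega}^N)\leq e^{s}\rho^{N}$ together with $N\geq j$ (hence $\rho^{N}\leq\rho^{j}$) and the uniform-in-$k$ boundedness of $k\rho^{kM}$ to absorb the polynomial prefactor; (ii) rectangles intersecting $X$ in a proper non-empty subset, whose count is $O(jk)$ but whose contributions decay via $\eta^{|X\cap R|}$ paired with the partial-intersection distance cost; and (iii) rectangles disjoint from $X$, for which the geometric bound $e^{-\mu\sum_{x\in X}d(x,R)}\leq e^{-\mu j\,d(X,R)}$ produces a tail summing to $O(e^{-\mu j})$ independently of $\ell$. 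Combining the three contributions produces a bound of the form $\tilde C\lambda^{j}$ with some $\lambda\in(\rho,1)$ and $\tilde C$ depending only on $\delta$, $\Delta$, and the distribution of $\nu$.
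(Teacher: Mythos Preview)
Your approach is workable but differs substantially from the paper's and is considerably more elaborate. The paper avoids summing over rectangles altogether by giving a \emph{direct} lower bound on the distance: choose $k\in\N$ and $p>0$ with $\P(\nu\ge 1/k)=p$, set $Y_x:=\mathbf{1}_{\{\nu_x\ge 1/k\}}$, and observe that any $R\in\mathcal{R}_{V_\omega}^N$ can contain at most $k-1$ sites with $Y_x=1$ (else $\sum_{x\in R}V_\omega(x)\ge 1$). Hence for every $Y\in\mathcal A_{X,N}$ and every $R\in\mathcal R_{V_\omega}^N$ one has $d_N(Y,R)\ge |X\setminus R|\ge \sum_{x\in X}Y_x-(k-1)$, so
\[
d_N(\mathcal A_{X,N},\mathcal R_{V_\omega}^N)\ \ge\ \sum_{x\in X}Y_x-(k-1).
\]
Inserting this into Proposition~\ref{prop:spectralbound} and using independence of the $Y_x$ gives $\E[\|\chi_{\mathcal A_{X,N}}Q_\delta^N\|]\le Ce^{\mu(k-1)}(1-p+pe^{-\mu})^{j}$ in one line, with no case analysis and no sum over $\mathcal R^N$.

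Your route can be pushed through, but the sketch has a genuine gap in regime~(ii). The decay $\eta^{|X\cap R|}$ with $|X\cap R|\ge 1$ contributes only a factor $\eta$ per rectangle, and together with the count $O(jk)$ and the prefactor $e^{-\mu j}$ you obtain at best $jk\,\eta\,e^{-\mu j}$; since $k=N/M$ may be as large as $2\ell$, this is not uniform in $\ell$. The ``partial-intersection distance cost'' you allude to does not remove the $k$: for a rectangle meeting $X$ in a single point near its edge, the remaining $j-1$ points of $X$ can all lie one step outside $R$, so $\sum_{x\in X\setminus R}d(x,R)=j-1$ and nothing beyond $e^{-\mu(j-1)}$ is gained. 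What actually repairs this is to use the full-strength Chernoff bound $\P(R\in\mathcal R_{V_\omega}^N)\le e^s\rho^{N}$ in regime~(ii) as well (not only in regime~(i)); then the contribution is at most $jk\,e^s\rho^{kM}$, and since $kM\ge j$ one has $jk\,\rho^{kM}\le C'(\rho')^{j}$ uniformly in $k$ for any $\rho'\in(\rho,1)$. With this correction all three regimes sum to $\tilde C\lambda^j$, but the paper's argument reaches the same conclusion without any decomposition.
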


\begin{proof} Since $\mathbb{P}(\nu=0)\neq 1$, there exist $k\in\mathbb{N}$ and $p>0$ such that $\mathbb{P}(\nu\geq 1/k)=p$. Now, for each $x\in\mathcal{V}_\mathbb{Z}$, introduce the Bernoulli random variable $Y_x$, which is equal to $1$ if $\nu_x\geq 1/k$ (with probability $p$) and zero else (with probability $1-p$).
Then, for any $X\subset\Lambda_\ell^c$ with $|X|=j$, we have
\begin{equation} \label{eq:distest}
	d_N(\mathcal{A}_{X,N},\mathcal{R}^N_{V_\omega})\geq \sum_{x\in X} Y_x(\omega)-(k-1)\:.
\end{equation}	
	To see this, note that $\sum_{x\in X}Y_x$ counts the number of sites in $X$, where the background potential has a value of at least $1/k$. Thus, if $\sum_{x\in X}Y_x\geq k$, this means that there can be no rectangular configuration $R\in\mathcal{R}^N_V$ such that $X\subset R$. At the very least, $\sum_{x\in X} Y_x(\omega)-(k-1)$ particles would each have to move by one or more steps to reach a configuration in $\cR_{V_\omega}^N$, thus implying \eqref{eq:distest}.  
	Consequently, using Proposition \ref{prop:EEbound}, we get
	\begin{align}
		&\mathbb{E}\left(\|Q_{\delta}^N(V_\omega)\chi_{\mathcal{A}_{X,N}}\|\right) \leq C\cdot\mathbb{E}(\exp(-\mu d_N(\cA_{X,N}, \mathcal{R}_{V_\omega}^N)))\\
		\leq& Ce^{\mu(k-1)}\mathbb{E}\left[\exp\left(-\mu\sum_{x\in X}Y_x(\omega)\right)\right]=\tilde{C}\left(1-p+pe^{-\mu}\right)^{|X|}=\tilde{C}\lambda^j\:,
	\end{align} 
where $\lambda:=(1-p+pe^{-\mu})<1$ and $\tilde{C}:=Ce^{\mu(k-1)}$. 
\end{proof}

\section{Level Sets}                        \label{sec:5}                                                                                                                                                                                                                                                                                                                                                                                                                                                                                                                                                                 
For a rectangular configuration $R^{kM}_i=\{i,i+1,\dots,i+k-1\}\times \{1,2,\dots,M\}$, let its \emph{internal boundary} be given by 
\begin{equation}
\partial^{in}R^{kM}_i=\{i\}\times \{1,2,\dots,M\}\cup \{i+k-1\}\times\{1,2,\dots,M\}\:.
\end{equation}
\begin{definition} 
	\label{def:levels}
	Let $R\subset\mathcal{V}_\mathbb{Z}$ be a rectangular configuration. We define the \emph{level function} $L_R:\mathcal{V}_\mathbb{Z}\rightarrow \mathbb{Z}$ to be given by
	\beq
	L_R(x) = \begin{cases}
		d(x, \partial^{in}R) &\text{ if } x\not\in R\\
		-d(x, \partial^{in}R) &\text{ if } x\in R\\
	\end{cases}
	\eeq	
	
	The preimages $L_R^{-1}(n)$ will be called the levels of $R$, and the quantities $L_n(R) = |L_R^{-1}(n)|$ are the sizes of the levels of $R$. Note that there are only finitely many non-positive levels.   
\end{definition}

\begin{definition}
	\label{def:level-enum} We say that an enumeration $e_n$ of $\mathcal{V}_\mathbb{Z}$ is \emph{level respecting} if 
	\begin{itemize}
		\item $e_1 \in L_R^{-1}(\min\{n \in \Z: L_R^{-1}(n)\neq\emptyset\})$.
		\item $L_R(e_n) \geq L_R(e_m)$ if and only if $n \geq m$.
	\end{itemize}
\end{definition}

\begin{lemma}
	\label{lem:est-w-levels}
	Let $R$ be a rectangular configuration with $|R|=N$ and let $e_n$ be a level respecting enumeration of $\mathcal{V}_\mathbb{Z}$. We have the following inequality:
	\begin{align}
		&\sum_{\substack{ X \subset \mathcal{V}_\mathbb{Z}\\ |X| = N}} e^{-\mu d_N(X,R)} \leq \sum_{j = 0}^N \left(\sum_{\substack{ X \subset \mathcal{V}_\mathbb{Z} \setminus R \\ |X| = j }} e^{-\mu \sum_{x\in X} d(x, \partial^{in}R )}\right) \left(\sum_{\substack{ Y \subset R \\ |X| = j }} e^{-\mu \sum_{y\in Y} d(y, \partial^{in}R )}\right) \nn \\
		=& \sum_{j = 0}^N \left(\sum_{N < x_1 <\dots < x_j} e^{-\mu(L_R(e_{x_1})+\cdots L_R(e_{x_j}) )}\right) \left(\sum_{0 < y_1 < \dots < y_j \leq N} e^{-\mu (|L_R(e_{y_1})|+\cdots |L_R(e_{y_j})|)}\right) 
	\end{align}
\end{lemma}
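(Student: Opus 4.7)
The plan is to reduce the matching that defines $d_N(X,R)$ to a matching on the symmetric difference of $X$ and $R$, bound each matched pair by a pointwise geometric inequality tied to $\partial^{in}R$, and then split the generating sum via the natural bijection $X\leftrightarrow(X\setminus R,\,R\setminus X)$. Since $|X|=|R|=N$, the particles in $X\cap R$ can be matched to themselves at zero cost, so
\begin{equation*}
d_N(X,R)=\min_{\sigma}\sum_{x\in X\setminus R}d(x,\sigma(x)),
\end{equation*}
where the minimum is taken over bijections $\sigma\colon X\setminus R\to R\setminus X$. Set $j:=|X\setminus R|=|R\setminus X|\in\{0,1,\dots,N\}$.

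The crux is the pointwise inequality
\begin{equation*}
d(x,y)\geq d(x,\partial^{in}R)+d(y,\partial^{in}R)\qquad\text{for all }x\in\mathcal{V}_\mathbb{Z}\setminus R,\ y\in R.
\end{equation*}
Writing $R=\{i,\dots,i+k-1\}\times\{1,\dots,M\}$ and letting $a$ and $b$ denote the first coordinates of $x$ and $y$, we have $b\in[i,i+k-1]$ while $a\notin[i,i+k-1]$. If $a<i$, then
\begin{equation*}
d(x,y)\geq|a-b|=(i-a)+(b-i)=d(x,\partial^{in}R)+(b-i)\geq d(x,\partial^{in}R)+d(y,\partial^{in}R),
\end{equation*}
since $d(y,\partial^{in}R)=\min(b-i,\,i+k-1-b)\leq b-i$. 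The case $a>i+k-1$ is symmetric.

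Applying this bound to each matched pair in any bijection $\sigma$ and then taking the minimum yields
\begin{equation*}
d_N(X,R)\geq\sum_{x\in X\setminus R}d(x,\partial^{in}R)+\sum_{y\in R\setminus X}d(y,\partial^{in}R).
\end{equation*}
The assignment $X\mapsto(X\setminus R,\,R\setminus X)$ is a bijection between $N$-element subsets of $\mathcal{V}_\mathbb{Z}$ and pairs $(X',Y')$ with $X'\subset\mathcal{V}_\mathbb{Z}\setminus R$, $Y'\subset R$, and $|X'|=|Y'|=j$. Substituting the displayed lower bound into $\sum_{|X|=N}e^{-\mu d_N(X,R)}$, regrouping by $(X',Y')$, and summing over $j$ factorises the exponential and produces the first inequality of the lemma.

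To recast the result in level-respecting form I would use that $L_R(x)=d(x,\partial^{in}R)$ on $\mathcal{V}_\mathbb{Z}\setminus R$ and $|L_R(y)|=d(y,\partial^{in}R)$ on $R$, together with the fact that any level-respecting enumeration places the $N$ elements of $R$ (levels $\leq 0$) in positions $1,\dots,N$ and the elements of $\mathcal{V}_\mathbb{Z}\setminus R$ (levels $\geq 1$) in positions $n>N$. Under this correspondence the two subset sums translate directly into the sums indexed by $N<x_1<\dots<x_j$ and $0<y_1<\dots<y_j\leq N$. The principal obstacle is establishing the pointwise distance inequality; once this geometric step is in place, the rest is a routine bijective rewrite and change of index.
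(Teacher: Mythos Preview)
Your proof is correct and follows essentially the same route as the paper's: both establish the key lower bound $d_N(X,R)\geq\sum_{x\in X\setminus R}d(x,\partial^{in}R)+\sum_{y\in R\setminus X}d(y,\partial^{in}R)$ and then factorise the sum via the bijection $X\leftrightarrow(X\setminus R,\,R\setminus X)$. The paper phrases the geometric input as ``particles in $X_{out}$ must pass through $\partial^{in}R$'' and ``holes in $R\setminus X_{in}$ must move out through $\partial^{in}R$'', while you prove the equivalent pointwise inequality $d(x,y)\geq d(x,\partial^{in}R)+d(y,\partial^{in}R)$ directly from the first coordinates; your version is arguably more explicit, and your reduction of $d_N$ to a matching on the symmetric difference (via the triangle-inequality swap that fixes $X\cap R$) makes the argument slightly cleaner than the paper's informal particle--hole picture.
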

\begin{proof}
	Decompose $X$  into $X_{in} \cup X_{out}$ corresponding to the elements of $X$ that lie inside or outside of $R$. If the configuration $X$ moves to $R$ along a shortest path, then every element/particle of $X_{out}$ must eventually pass through $\partial^{in}R$ -- thus contributing to the distance $d_N(X,R)$ between $X$ and $R$ by at least $\sum_{x\in X} d(x, \partial^{in}R)$. Now, for the particles in $X_{in}$, i.e. those that already lie inside the rectangle $R$, consider $Y = R \setminus X_{in}$ -- the complimentary configuration of ``holes". Eventually, each such hole has to move to the outside of $R$, thus contributing to the distance $d_N(X,R)$ by at least $\sum_{y\in Y} d(y, \partial^{in}R )$. Therefore, 
	\beq
	d(X,R) \geq \sum_{x\in X} d(x, \partial^{in}R) + \sum_{y\in Y} d(y, \partial^{in}R ).
	\eeq
	
	The first inequality follows. The second equality follows from the definition of the level function, and of level respecting enumerations. In particular notice that every configuration of $j$ particles is realized as an ordered subset of $j$ elements of the enumeration $e_n$. Moreover, since $e_n$ is level respecting, we have that $e_{N+1}\notin R$, while $e_N\in R$. 
\end{proof}

	\section{Proof of the main results} \label{sec:6}
	Before proceeding with the proofs of Theorems \ref{thm:combinatorial}, \ref{thm:logbound}, and \ref{thm:arealaw}, we need one more technical lemma. 
	\begin{lemma}
		\label{lemma:Ax-q1d} Let $R\subset\mathcal{V}_\mathbb{Z}$ be a rectangular configuration with $|R|=N=kM$.

		Then, we have that 
		\beq
		\sum_{\substack{|X| = j \\ X \subset \cV_\mathbb{Z} \setminus \Lambda_\ell}} \exp\left(-\mu \min_{z \in \Z}\left( d_N(\cA_{X,N}, R + z) \right)\right) \leq \frac{8 f(R,\tfrac{\mu}{2})}{1-e^{-\tfrac{\mu}{2}}} .
		\eeq
	\end{lemma}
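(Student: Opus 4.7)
The plan is to construct, for each $X \subset \mathcal{V}_\mathbb{Z} \setminus \Lambda_\ell$ with $|X|=j$, an $N$-particle configuration $\tilde Y_X \subset \mathcal{V}_\mathbb{Z}$ and a nonnegative integer $k_X \in \mathbb{Z}_{\geq 0}$ satisfying: (a) $D(X) := \min_{z\in\mathbb{Z}} d_N(\mathcal{A}_{X,N}, R+z) = d_N(\tilde Y_X, R)$; (b) $D(X) \geq k_X$; and (c) the map $X \mapsto (\tilde Y_X, k_X)$ has fibres of cardinality at most $8$. Granted (a)--(c), splitting $\mu D(X) \geq \tfrac{\mu}{2} d_N(\tilde Y_X, R) + \tfrac{\mu}{2} k_X$ and summing gives
\[
\sum_X e^{-\mu D(X)} \leq 8 \sum_{\tilde Y:\,|\tilde Y|=N} e^{-\mu d_N(\tilde Y, R)/2} \sum_{k\geq 0} e^{-\mu k/2} = \frac{8 f(R,\mu/2)}{1-e^{-\mu/2}}.
\]

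For the construction, let $(z^{\ast}, Z^{\ast})$ be an optimizer of the double minimization in $D(X)$, put $Y^{\ast} := X \cup Z^{\ast}$, and set $\tilde Y_X := Y^{\ast} - z^{\ast}$. Translation invariance of $d_N$ gives $d_N(\tilde Y_X, R) = d_N(Y^{\ast}, R+z^{\ast}) = D(X)$, which is (a). Since $X \subset \Lambda_\ell^c$ and $Z^{\ast} \subset \Lambda_\ell$ are disjoint, $X = Y^{\ast} \cap \Lambda_\ell^c = (\tilde Y_X + z^{\ast}) \cap \Lambda_\ell^c$, so the enlarged map $X \mapsto (\tilde Y_X, z^{\ast})$ is injective. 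Now let $z_0(\tilde Y)$ denote a canonical ``anchor'' — for example, the smallest $z$ in each maximal interval of $\{z \in \mathbb{Z}: |(\tilde Y+z)\cap\Lambda_\ell^c| = j\}$ — and set $k_X := |z^{\ast} - z_0(\tilde Y_X)|$.

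The multiplicity bound (c) is combinatorial: on the strip, the step function $z \mapsto |(\tilde Y+z)\cap\Lambda_\ell^c|$ attains the value $j$ on at most two maximal intervals (corresponding to $R+z^{\ast}$ overhanging $\Lambda_\ell$ on the right vs.~on the left). Within each such interval and for each $k$, $z^{\ast} \in \{z_0 \pm k\}$ contributes at most two values; a further factor of $2$ accommodates any ties in the choice of optimizer $(z^{\ast}, Z^{\ast})$, totalling $8$.

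The main obstacle is the inequality (b), $D(X) \geq k_X$. The idea is that a unit shift of $z^{\ast}$ away from $z_0$ moves a full column of $M$ particles of $R+z^{\ast}$ across the boundary $\partial\Lambda_\ell$, because $R+z$ always spans the full width of the strip. Since $|Z^{\ast}| = N-j$ is fixed and $Z^{\ast} \subset \Lambda_\ell$, this over/under-shoot creates an unmatched column of $(R+z^{\ast}) \setminus Z^{\ast}$ outside $\Lambda_\ell$, whose entries contribute to the hole-level sum in the lower bound of Lemma \ref{lem:est-w-levels} applied to the matching of $X \cup Z^{\ast}$ against $R+z^{\ast}$. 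Iterating this observation over $k_X$ units of shift yields $D(X) \geq k_X$; the delicate point is tracking the matching cleanly when $X$ straddles both sides of $\Lambda_\ell$ or when $j$ is not a multiple of $M$.
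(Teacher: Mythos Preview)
Your framework is sound: if (a)--(c) held, the splitting $\mu D(X)\ge \tfrac{\mu}{2}d_N(\tilde Y_X,R)+\tfrac{\mu}{2}k_X$ and the fibre bound would indeed give the stated inequality. The idea of translating the optimizer back to $R$ and then encoding the residual shift in an integer $k_X$ is quite different from the paper's argument, which simply drops the minimum, sums over \emph{all} $z\in\Z$, and partitions the translates according to how the \emph{rectangle} $R+z$ sits relative to $\Lambda_\ell$.

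However, your justifications of (b) and (c) both have real gaps.

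For (c), you assert that $z\mapsto |(\tilde Y+z)\cap(\cV_\Z\setminus\Lambda_\ell)|$ takes the value $j$ on at most two maximal intervals. This is true when $\tilde Y$ is a rectangle, but $\tilde Y_X$ is in general not a rectangle; for instance, with $M=1$, $\ell=10$, $R=\{1,2,3,4\}$ one checks that $\tilde Y=\{-13,2,3,4\}$ has three level-$3$ intervals $\{-3\},\{8\},\{14,\dots,23\}$. Your parenthetical ``corresponding to $R+z^\ast$ overhanging $\Lambda_\ell$ on the right vs.\ on the left'' conflates $\tilde Y$ with $R$. It may still be true that only boundedly many of these intervals can contain an \emph{optimizing} $z^\ast$ (indeed, in the example above the extra interval does not produce a preimage once the optimizer constraint is enforced), but you have not argued this, and the extra factor of~$2$ for ``ties'' does not address it.

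For (b), the sketch mixes shifts of $\tilde Y$ with shifts of $R$: the anchor $z_0$ is defined via $\tilde Y$, yet the argument speaks of ``a full column of $M$ particles of $R+z^\ast$'' crossing $\partial\Lambda_\ell$. It is not clear why $d_N(\tilde Y_X,R)$ should dominate $|z^\ast-z_0(\tilde Y_X)|$; the ``delicate points'' you flag are exactly where the proof is missing.

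By contrast, the paper sidesteps both issues by never working with $\tilde Y$. It bounds $\min_z$ by $\sum_z$ and decomposes $\Z=M_1\sqcup M_2\sqcup M_3$ according to whether $R+z$ lies outside $\Lambda_\ell$, straddles its boundary, or lies inside; since $R+z$ is always a rectangle, each sub-level $M_i^r$ has cardinality at most~$2$, and elementary lower bounds of the form $d_N(\cA_{X,N},R+z)\ge (N-j)r$ (for $M_1^r$) and $d_N(\cA_{X,N},R+z)\ge |j-r|$ (for $M_2^r$) follow from counting particles that must cross $\partial\Lambda_\ell$. Splitting $e^{-\mu d_N}\le e^{-\mu d_N/2}e^{-\mu(\cdot)/2}$ then yields a geometric series in $r$ times $f(R,\mu/2)$.
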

	
	\begin{proof}
		For convenience, we will write $\Lambda=\Lambda_\ell$.
		We begin by partitioning $\Z$ into three disjoint subsets:
		\begin{enumerate}
			\item $M_1 = \{z \in \Z: (R + z) \subset \cV_\mathbb{Z} \setminus \Lambda  \}$
			\item $M_2 = \{z \in \Z: (R + z) \cap \partial^{in}\Lambda \neq \emptyset  \}$
			\item $M_3 = \{z \in \Z: (R + z) \cap \partial^{in}\Lambda = \emptyset \text{ and }  (R + z) \subset \Lambda \}$
		\end{enumerate}
		Since $R$ is a rectangle, this indeed is a partition of $\Z$. For any fixed value of $j\in\{1,2,\dots,N-1\}$, we have that 
		
		\begin{align}
		&\sum_{\substack{|X| = j \\ X \subset \cV_\mathbb{Z} \setminus \Lambda_\ell}} \exp\left(-\mu \min_{z \in \Z}\left( d_N(\cA_{X,N}, R + z) \right)\right) \\\leq &\sum_{z \in M_1 \sqcup M_2 \sqcup M_3} \sum_{\substack{|X| = j \\ X \subset \cV_\mathbb{Z} \setminus \Lambda_\ell}} \exp\left(-\mu  d_N(\cA_{X,N}, R + z) \right)\:.
		\end{align}
		
		Firstly, let us consider the contributions to the sum coming from $z \in M_1$. We again partition $M_1$ into subsets, $M_1^{r} $ for $r = 1,2, \dots$, where
		\beq
		M_1^r = \{ z \in M_1 : (R + z) \cap L_r(\Lambda) \neq \emptyset, \text{ and } m < r \text{ implies } (R + z) \cap L_m(\Lambda)= \emptyset\} \:.
		\eeq 
		
		So, if $z \in M_1^r$ then $R + z$ and the $r^{th}$ level set of $\Lambda$ have nonempty intersection while the intersection of $R + z$ with any lower level is disjoint. It is clear that $M_1 = \bigsqcup_{r = 1}^{\infty} M_1^r$ and that $|M_1^r| \leq 2$. Notice that if $z \in M_1^r$ then $d(X \cup Z, R+z) \geq |Z|r$. From this observation, we obtain the estimate 
		\begin{align}  \nn& \sum_{z \in M_1}  \sum_{\substack{|X| = j \\ X \subset \cV_\mathbb{Z} \setminus \Lambda_\ell}}  \exp\left(-\mu  d_N(\cA_{X,N}, R + z) \right)\\ = &\sum_{r=1}^\infty \sum_{z \in M_1^r}  \sum_{\substack{|X| = j \\ X \subset \cV_\mathbb{Z} \setminus \Lambda_\ell}}  \exp\left(-\mu  d_N(\cA_{X,N}, R + z) \right)\\
			\leq &\sum_{r=1}^\infty 2e^{-\tfrac{\mu}{2} r(N-j)} \sum_{\substack{|X| = j \\ X \subset\cV_\mathbb{Z} \setminus \Lambda_\ell}} \exp\left(-\tfrac{\mu}{2} d(\cA_{X,N}, R + \hat{z}_r)\right)\label{inequality}\\
			\leq& \sum_{r=1}^\infty 2e^{-\tfrac{\mu}{2} r(N-j)} f(R, \tfrac{\mu}{2}) =  2f(R,\tfrac{\mu}{2})\frac{e^{-\frac{\mu}{2}(N-j)}}{1-e^{-\tfrac{\mu}{2}(N-j)}} \label{eq:M1-est-Q1D}
		\end{align}
In \eqref{inequality}, we estimated $d_N(\cA_{X,N},R+z)\geq \frac{1}{2}d_N(\cA_{X,N},R+z)+\frac{1}{2}(N-j)r$. Moreover, $\hat{z}_r$ denotes the element in $M_1^r$ that maximizes $\exp(-\mu d_N(\mathcal{A}_{X,N},R+z))$ and since $|M_1^r|\leq 2$, Inequality \eqref{inequality} follows.

		Now we focus on the case $z \in M_2$. We immediately get that $|M_2| \leq 2k$. We further partition $M_2$ into mutually disjoint subsets, $M_2^r$, given by 
		\beq
		M_2^r := \{ z \in M_2: |(R+z) \cap (\mathcal{V}_\mathbb{Z} \setminus \Lambda)| = r\}.  
		\eeq 
		
		So if $z\in\M_2^r$, then in $R+z$ consists of $r$ elements which are not in $\Lambda$ and $(N-r)$ elements which are also elements of $\Lambda$. Clearly, we also have that $|M_2^\ell| \leq 2$. From reasoning similar to before we get the estimate
		\begin{align}  \nn \sum_{z \in M_2} \sum_{\substack{ X \subset \mathcal{V}_\mathbb{Z} \setminus \Lambda\\|X| = j }} &\exp\left(-\mu d(\cA_{X,N}, R + z)\right) = \sum_{r=0}^{N-1} \sum_{z \in M_2^r} \sum_{\substack{|X| = j \\ X \subset \cV_\mathbb{Z} \setminus \Lambda}} \exp\left(-\mu d(\cA_{X,N}, R + z)\right)\\ \nn
	&\leq \sum_{\ell=0}^{N-1} 2e^{-\tfrac{\mu}{2} |j-\ell|} f(R,\tfrac{\mu}{2}) \leq 2 f(R,\tfrac{\mu}{2}) \frac{2}{1-e^{-\tfrac{\mu}{2}}}\:.  \label{eq:M2-est-Q1D}
\end{align}

		We can estimate the sum over $z \in M_3$ in a completely analogous way as the one for $z \in M_1$. To this end, one partitions $M_3$ according to the highest (negative) level that $R+z$ intersects non-trivially, which in the end yields the same estimate as \eqref{eq:M1-est-Q1D}.  From combining \eqref{eq:M1-est-Q1D} and \eqref{eq:M2-est-Q1D}, we get
		\begin{align}
			 \sum_{\substack{|X| = j \\ X \subset \cV_\mathbb{Z} \setminus \Lambda}} \exp\left(-\mu \min_{z \in \Z}\left( d(\cA_{X,N}, R + z) \right)\right) 
			&\leq 4 f(R,\tfrac{\mu}{2}) \left(\frac{1}{1-e^{-\tfrac{\mu}{2}}} + \frac{e^{-\frac{\mu}{2}(N-j)}}{1-e^{-\tfrac{\mu}{2}(N-j)}} \right) \nn\\
			&\leq \frac{8 f(R,\tfrac{\mu}{2})}{1-e^{-\tfrac{\mu}{2}}}.
		\end{align}
		
	\end{proof}
	\subsection{Proof of Theorem \ref{thm:combinatorial}}

	\begin{proof}
	To begin with, observe that
		\beq \label{eq:levelgrow}
		L_R(e_n) \geq \frac{1}{2M}(n - N).
		\eeq
		This follows from the fact that for $n\leq N$, we have $L_R(e_n)\leq 0$. For larger values of $n$, note that the size of each level set is given by $L_n(R)=2M$, which implies \eqref{eq:levelgrow}.
	
		Applying Lemma \ref{lem:est-w-levels}, we get the inequality
		\begin{align}
			\sum_{\substack{  X \subset \cV_\mathbb{Z}\\|X| = N}} e^{-\mu d_N(X,R)} &\leq \sum_{j = 0}^{N} \left(\sum_{N < x_1 <\dots < x_j} e^{-\mu\sum_k L_R(e_{x_k}) }\right) \left(\sum_{0 < y_1 < \dots < y_j \leq N} e^{-\mu \sum_k |L_R(e_{y_k})|}\right) \nn \\
			& = \sum_{j=0}^N (I)(II) 
		\end{align}
		We will estimate (I) and (II) by applying \eqref{eq:levelgrow}. First we estimate (I). 
		\begin{align}
			(I)&\leq \sum_{N < x_1 <\dots < x_j} e^{-\frac{\mu}{2M}\sum_{k=1}^j (x_k - N) }\\ \nn 
			&= e^{\frac{\mu j N}{2M}}\sum_{N < x_1 <\dots < x_j} e^{-\frac{\mu}{2M}\sum_{k=1}^j x_k}.
		\end{align}
		
	Estimating the sum by integrals: 
		\begin{align}
			\label{eq:q1dI-integraltest}
			\sum_{N < x_1 <\dots x_j} & e^{-\frac{\mu}{2M}\sum_{k} x_k} \leq \int_N^\infty  dx_1 e^{-\frac{\mu}{2M} x_1} \int_{x_1}^{\infty} dx_2 e^{-\frac{\mu}{2M} x_2} \cdots \int_{x_{j-1}}^\infty dx_j e^{-\frac{\mu}{2M} x_j}\\ \nn
			&= \left(\frac{2M}{\mu}\right)^{j} \int_{\frac{\mu N}{2M}}^\infty  dt_1 e^{-t_1} \int_{t_1}^{\infty} dt_2  e^{-t_2} \cdots \int_{t_{j-1}}^\infty dt_j e^{-t_j} \\
			&= \frac{1}{j!} \left(\frac{2M}{\mu}\right)^{j} e^{-\frac{j \mu N}{2M}}\:.
		\end{align}
		
		Next, we estimate (II). We begin by re-indexing, $z_k = N - y_{j+1-k}$. 
		\begin{align}
			(II)&\leq \sum_{0 < y_1 < \dots < y_j \leq N} e^{-\frac{\mu}{2M} \sum_k (N - y_k)} = \sum_{0 \leq z_1 < \dots < z_j < N} e^{-\frac{\mu}{2M} \sum_k z_k}.
		\end{align}
		We have that 
		\begin{align}
			\label{eq:q1destimateII}
			(II) &\leq \sum_{0 \leq z_1 < \dots < z_j < N} e^{-\tfrac{\mu}{2M}\sum_k z_k}\\ \nn 
			&= \sum_{0 < z_2 < \dots < z_j \leq N} e^{-\tfrac{\mu}{2M}\sum_{k \geq 2} z_k} + \sum_{0 < z_1 < \dots < z_j \leq N} e^{-\tfrac{\mu}{2M}\sum_{k} z_k} = (A) + (B). \\ \nn
		\end{align}
	We estimate both terms, $(A)$ and $(B)$, by integrals:
	\begin{align}
			(A) &\leq  \int_0^\infty  dz_2 e^{-\tfrac{\mu}{2M} z_2} \int_{z_2}^{\infty} dz_3 e^{-\tfrac{\mu}{2M}z_3} \cdots \int_{z_{j-1}}^\infty dz_j e^{-\tfrac{\mu}{2M} z_j} = \frac{1}{(j-1)!}\left(\frac{2M}{\mu}\right)^{j-1}\\ \nn
			(B) &\leq \int_0^\infty  dz_1 e^{-\tfrac{\mu}{2M} z_1} \int_{z_1}^{\infty} dz_2 e^{-\tfrac{\mu}{2M}z_2} \cdots \int_{z_{j-1}}^\infty dz_j e^{-\tfrac{\mu}{2M} z_j} = \frac{1}{j!}\left(\frac{2M}{\mu}\right)^{j}\:.
		\end{align}
		
		Consequently, we get
		\begin{equation}
						(II)\leq (A)+(B) \leq \frac{1}{(j-1)!}\left(\frac{2M}{\mu}\right)^{j-1} + \frac{1}{j!}\left(\frac{2M}{\mu}\right)^{j}= \frac{1}{j!}\left(1 + \frac{j\mu}{2M}\right)\left( \frac{2M}{\mu}\right)^j.
		\end{equation}
		
		Using these estimates, we get that
		
		\begin{align}
			\sum_{j=0}^N (I)(II) &\leq \sum_{j=0}^N \frac{1}{(j!)^2}\left(1 + \frac{j\mu}{2M}\right)\left( \frac{2M}{\mu}\right)^{2j} e^{-\frac{j \mu N}{2M}}\\
			&\leq \sum_{j=0}^N \frac{1}{(j!)^2}\left(1 + \frac{j\mu}{2M}\right)\left( \frac{2M}{\mu}\right)^{2j} \\
			&\leq \exp\left(\frac{4M}{\mu}\right) + \sum_{j=0}^N \frac{j \mu}{2M}\left( \frac{2M}{\mu}\right)^{2j} \frac{1}{(j!)^2}\\
			&\leq \exp\left(\frac{4M}{\mu}\right) + \frac{2M}{\mu} \sum_{j=0}^\infty \left( \frac{2M}{\mu}\right)^{2j} \frac{1}{(j!)^2}
			\leq \left(1 + \frac{2M}{\mu}\right)e^{\frac{4M}{\mu}}\:.
		\end{align}
		
	\end{proof}
	\subsection{Proof of Theorem \ref{thm:logbound}}
	\begin{proof}
		For any $j\in\{1,2,\dots,|\Lambda_\ell^c|-1\}$ we use Proposition \ref{prop:spectralbound} to estimate
		\begin{align} 
			&\sum_{X\subset\Lambda_\ell^c:|X|=j}\|\chi_{\mathcal{A}_{X,N}}\psi\|^{2\alpha}=\sum_{X\subset\Lambda_\ell^c:|X|=j}\|\chi_{\mathcal{A}_{X,N}}Q_\delta^N\psi\|^{2\alpha}\\\leq& \sum_{X\subset\Lambda_\ell^c:|X|=j}\|\chi_{\mathcal{A}_{X,N}}Q_\delta^N\|^{2\alpha}\leq C^{2\alpha}\sum_{X\subset\Lambda_\ell^c:|X|=j}\exp(-2\alpha\mu d_N(\mathcal{A}_{X,N},\cR^N))\:.
			\label{eq:Thm 6.1/1}
		\end{align}
	Since $\Lambda_\ell^c\subset\cV_\mathbb{Z}\setminus\Lambda_\ell$, we get
	\begin{align}
		\eqref{eq:Thm 6.1/1}&\leq C^{2\alpha}\sum_{X\subset\cV_\mathbb{Z}\setminus\Lambda_\ell:|X|=j}\exp(-2\alpha\mu d_N(\mathcal{A}_{X,N},\cR^N))\\&\leq \frac{8C^{2\alpha}f(R,{\mu\alpha})}{1-e^{-{\mu\alpha}}}\leq \frac{8C^{2\alpha}}{1-e^{-\mu\alpha}}\left(1+\frac{2M}{\mu\alpha}\right)e^{\tfrac{4M}{\mu\alpha}}\:,
	\end{align}
where we used Lemma \ref{lemma:Ax-q1d} for the penultimate and Theorem \ref{thm:combinatorial} for the last estimate. Together with \eqref{eq:alphatrace}, this yields the estimate
\begin{align}
	\mathscr{E}(\psi;\ell)\leq\mathscr{E}_\alpha(\psi;\ell)=&\frac{1}{1-\alpha}\log\left(\mbox{tr}(\rho_{\psi;\ell}^\alpha)\right)
	\\\leq& \frac{1}{1-\alpha}\log\left(6+\frac{16 C^{2\alpha}}{1-e^{-\mu\alpha}}\sum_{j=1}^{M\ell-1}\left(1+\frac{2M}{\mu\alpha}\right)e^{\tfrac{4M}{\mu\alpha}}\right)
	\\\leq& \frac{1}{1-\alpha}\log\left(6+\frac{16 C^{2\alpha}}{1-e^{-\mu\alpha}}M\ell\left(1+\frac{2M}{\mu\alpha}\right)e^{\tfrac{4M}{\mu\alpha}}\right)\:,
\end{align}
for any normalized $\psi\in\mathcal{Q}_\delta^N$, for any $N\in\mathfrak{N}$.	Consequently, we get
\begin{align}
	&\limsup_{\ell\rightarrow\infty}\left(\frac{\sup_{N\in\mathfrak{N}}\sup\left\{\mathscr{E}(\psi;\ell):{\psi\in \mathcal{Q}^N_\delta, \|\psi\|=1}\right\}}{\log\ell} \right)\\\leq \frac{1}{1-\alpha}&\limsup_{\ell\rightarrow\infty}\frac{\log\left(6+\frac{16 C^{2\alpha}}{1-e^{-\mu\alpha}}M\ell\left(1+\frac{2M}{\mu\alpha}\right)e^{\tfrac{4M}{\mu\alpha}}\right)}{\log\ell}=\frac{1}{1-\alpha}\:.
\end{align}	
Since this is true for any $\alpha\in(0,1)$ this implies \eqref{eq:logbound}.
	\end{proof}
	\subsection{Proof of Theorem \ref{thm:arealaw}}
	\begin{proof}
		Using Proposition \ref{prop:spectralbound}, we estimate
	\begin{align}
		&\sum_{j=1}^{|\Lambda_\ell^c|-1}\sum_{X\subset\Lambda_\ell^c:|X|=j}\|\chi_{\cA_{X,N}}Q_\delta^N(V_\omega)\|^{2\alpha}\\\leq& \sum_{j=1}^{|\Lambda_\ell^c|-1}\sum_{X\subset\Lambda_\ell^c:|X|=j}\|\chi_{\cA_{X,N}}Q_\delta^N(V_\omega)\|^{\alpha}C^\alpha \exp(-\alpha\mu d_N(\cA_{X,N},\cR^N)) \:.
		\label{eq:randomtrace}
	\end{align}
From Jensen's inequality and Lemma \ref{lemma:largedev} we get:
\begin{align}
	&\mathbb{E}\left[\|\chi_{\cA_{X,N}}Q_\delta^N(V_\omega)\|^{\alpha}C^\alpha \exp(-\alpha\mu d_N(\cA_{X,N},\cR^N))\right]\\ \leq&	C^\alpha \exp\left(-\alpha\mu d_N(\cA_{X,N},\cR^N)\right) \left[\mathbb{E}\left(\|\chi_{\cA_{X,N}}Q_\delta^N(V_\omega)\|\right)\right]^{\alpha}\\ \leq &C^\alpha\tilde{C}^\alpha(\lambda^\alpha)^{|X|}\exp\left(-\alpha\mu d_N(\cA_{X,N},\cR^N)\right)\:.
\end{align}
 Taking expectations of $6+2\times\eqref{eq:randomtrace}$ (cf.\ \eqref{eq:alphatrace}) and using this estimate then yields
\begin{equation}
	\mathbb{E}\left[\sup_{N\in\mathfrak{N}}\:\sup_{\psi}\tr(\rho_{\psi;\ell}^\alpha)\right]\leq 6+2\sum_{j=1}^{|\Lambda_\ell^c|-1}\sum_{X\subset\Lambda_\ell^c:|X|=j}C^\alpha\tilde{C}^\alpha(\lambda^\alpha)^j\exp\left(-\alpha\mu d_N(\cA_{X,N},\cR^N)\right)\:,
\end{equation}
where $``\sup_\psi"$ indicates the supremum over all normalized elements of $\mathcal{Q}_\delta^N(V_\omega)$.
But the sum $\sum_{X}\exp\left(-\alpha\mu d_N(\cA_{X,N},\cR^N)\right)$ can be estimated in the same way as was done in the proof of Theorem \ref{thm:logbound}. We therefore get
\begin{align}
	\mathbb{E}\left[\sup_{N\in\mathfrak{N}}\:\sup_{\psi}\tr(\rho_{\psi;\ell}^\alpha)\right]\leq6+2(C\tilde{C})^\alpha\frac{8}{1-e^{-\tfrac{\mu\alpha}{2}}}\left(1+\frac{4M}{\mu\alpha}\right)e^{\tfrac{8M}{\mu\alpha}}\sum_{j=1}^{M\ell-1}(\lambda^\alpha)^j\\
	\leq 6+2(C\tilde{C})^\alpha\frac{8}{1-e^{-\tfrac{\mu\alpha}{2}}}\left(1+\frac{4M}{\mu\alpha}\right)e^{\tfrac{8M}{\mu\alpha}}\frac{\lambda^\alpha}{1-\lambda^\alpha}=:K_\alpha
\end{align}
where $K_\alpha=K_\alpha(\Delta,\delta,\mathbb{P})$.

Choosing $\alpha=1/2$, we get
\begin{align}
	&\mathbb{E}\left[\sup_{N\in\mathfrak{N}}\sup_\psi \mathscr{E}(\psi;\ell)\right]\leq 	\mathbb{E}\left[\sup_{N\in\mathfrak{N}}\sup_\psi \mathscr{E}_{1/2}(\psi;\ell)\right]=2\mathbb{E}\left[\sup_{N\in\mathfrak{N}}\sup_\psi \log\tr(\rho_{\psi;\ell}^{1/2})\right]\\
	=&2\mathbb{E}\left[\log\left(\sup_{N\in\mathfrak{N}}\sup_\psi \tr(\rho^{1/2}_{\psi;\ell})\right)\right]\leq 2\log\left[\mathbb{E}\left(\sup_{N\in\mathfrak{N}}\sup_\psi \tr(\tr(\rho^{1/2}_{\psi;\ell}))\right)\right]=2\log(K_{1/2})\:.
\end{align}
Since $K_{1/2}$ does not depend on $\ell$, taking the supremum over $\ell$ finishes the proof. 
	\end{proof}
	\begin{acknowledgement}
		CF gratefully acknowledges financial support by the  Simons Foundation through an AMS-Simons Travel Grant.
	\end{acknowledgement}


\begin{thebibliography}{VLRK03}
		\providecommand{\url}[1]{{\tt #1}}
		\providecommand{\urlprefix}{URL }
		\providecommand{\eprint}[2][]{e-print {#2}}
		
		\bibitem{ARFS}
		H.~Abdul-Rahman, C.~Fischbacher and G.~Stolz, Entanglement bounds in the {XXZ}
		quantum spin chain, {\em Annales Henri Poincar\'e} {\bf 21}, 2327--2366
		(2020).
		\bibitem{BW17}
		V.~Beaud and S.~Warzel, Low-energy fock-space localization for attractive
		hard-core particles in disorder, {\em Annales Henri Poincar\'e} {\bf 18},
		3143--3166 (2017).
		
		\bibitem{BW18}
		V.~Beaud and S.~Warzel, Bounds on the entanglement entropy of droplet states in
		the {XXZ} spin chain, {\em J. Math. Phys.} {\bf 59}, 012109--1--11 (2018).
		\bibitem{EKS2}
		A.~Elgart, A.~Klein and G.~Stolz, Manifestations of dynamical localization in
		the disordered {XXZ} spin chain, {\em Commun. Math. Phys.} {\bf 361},
		1083--1113 (2018).
		
		\bibitem{EKS}
		A.~Elgart, A.~Klein and G.~Stolz, Many-body localization in the droplet
		spectrum of the random {XXZ} quantum spin chain, {\em J. Funct. Anal.} {\bf
			275}, 211--258 (2018).
				\bibitem{FO} C.~Fischbacher and O. Ogunkoya: Entanglement Entropy Bounds in the Higher Spin XXZ Chain, Journal of Mathematical Physics {\bf 62} (2021), 101901.
						\bibitem{FSch} C. Fischbacher and R. Schulte: Lower Bound to the Entanglement Entropy of the XXZ Spin Ring, \emph{Annales Henri Poincar\'e, to appear}.
	\bibitem{MMNP}
	C.~Fischbacher and G.~Stolz, The infinite {XXZ} quantum spin chain revisited:
	structure of low lying spectral bands and gaps, {\em Math. Model. Nat.
		Phenom.} {\bf 9}, 44--72 (2014).
	\bibitem{FS}
	C.~Fischbacher and G.~Stolz, Droplet states in quantum {XXZ} spin systems on
	general graphs, {\em J. Math. Phys.} {\bf 59}, 051901 (2018).
	\bibitem{FisherDiss} L.~Fisher, Localization for the Random XXZ Spin-J Chain and Embedded Eigenvalues in Defective Periodic Quantum Graphs, PhD Thesis, University of California, Irvine, 2023.
			\bibitem{MPS20}
	P.~M\"uller, L.~Pastur and R.~Schulte, How much delocalisation is needed for an
	enhanced area law of the entanglement entropy?, {\em Commun. Math. Phys.}
	{\bf 376}, 649--679 (2020).
	
	\bibitem{MS20}
	P.~M\"uller and R.~Schulte, Stability of the enhanced area law of the
	entanglement entropy (2020), Ann. Henri Poincar\'e {\bf 21}, 3639--3658 (2020).
			\bibitem{Pfeif} P.~Pfeiffer: On the stability of the area law for the entanglement entropy of the Landau Hamiltonian, \emph{preprint, available at: https://arxiv.org/abs/2102.07287}
	
	\bibitem{PfeifSpitz} P.~Pfeiffer and W.~Spitzer: Entanglement Entropy of Ground States of the Three-Dimensional Ideal Fermi Gas in a Magnetic Field, \emph{preprint, avilable at: htpps://arxiv.org/abs/2209.09820}
	\bibitem{Sto20}
	G.~Stolz, Aspects of the {M}athematical {T}heory of {D}isordered {Q}uantum
	{S}pin {C}hains, {\em Analytic trends in mathematical physics, Contemp.
		Math.} {\bf 741}, 163--197 (2020).
\end{thebibliography}
\end{document}